\documentclass[preprint,12pt]{elsarticle}
\usepackage{amsmath}
\usepackage{amssymb}
\usepackage{amsthm}
\usepackage{url}
\usepackage[dvipsnames]{color}
\usepackage{hyperref}

\newcommand{\bbK}{\ensuremath{\mathbb{K}}}
\newcommand{\bbC}{\ensuremath{\mathbb{C}}}
\newcommand{\modmu}{\;(\mathrm{mod}\,\mu)}
\newcommand{\ee}{\mathrm{e}} %exponent
\newcommand{\dd}{\mathrm{d}} %differential
\newcommand{\ii}{\mathrm{i}}
\newcommand{\hhv}{\mathbf h}
\newcommand{\ez}{\mathbf e_3}

\newtheorem{theorem}{Theorem}[section]
\newtheorem{corollary}{Corollary}[theorem]
\newtheorem{proposition}{Proposition}[section]
\newtheorem{lemma}[theorem]{Lemma}
\newtheorem{definition}{Definition}[section]
\newtheorem{remark}{Remark}

\begin{document}

\begin{frontmatter}

\title{Recursive construction of spectral basis and its application to matrices}

\author{A.~Acus}
\address{Institute of Theoretical Physics and Astronomy,\break
Vilnius University,\break Saul{\.e}tekio 3, LT-10257 Vilnius,
Lithuania, arturas.acus@tfai.vu.lt}

\author{A.~Dargys}

\address{%
Center for Physical Sciences and Technology,\break Semiconductor
Physics Institute,\break Saul{\.e}tekio 3, LT-10257 Vilnius,
Lithuania, adolfas.dargys@ftmc.lt}

\begin{abstract}
We prove a recursive formula for construction of generalized spectral basis.
 Example of application is presented for symbolic computation with non-diagonalizable
 matrix that describes electromagnetic wave propagation in anisotropic media. The method is general enough and can be applied,
 for example, to multivectors of the Clifford algebra.
\end{abstract}

\begin{keyword}
function of matrix\sep generalised spectral basis \sep computer-aided theory

\PACS 15A18\sep 15A66
%% PACS codes here, in the form: \PACS code \sep code

%% MSC codes here, in the form: \MSC code \sep code
%% or \MSC[2008] code \sep code (2000 is the default)
\end{keyword}

\end{frontmatter}

\section{Introduction}

If an element of  associative algebra is diagonalizable then an analytical function be expressed in terms of ordinary spectral projectors. However, in the non-diagonalizable case, repeated roots generate nilpotent corrections, and the ordinary spectral decomposition must  be
replaced by more general spectral basis~\cite{Sobczyk1996,Sobczyk1997}. The practical motivation is simple. If minimal polynomial computed for algebra element $A$, for
example, Clifford number (multivector) or matrix, factors out as
\begin{equation}
\mu(x)=\prod_{i=1}^s (x-\lambda_i)^{m_i};\quad n=\deg \mu(x)=\sum_{i=1}^s m_i
\label{MinimalPolynomial}
\end{equation}
where the number of factors $s$ matches the number of distinct roots $\lambda_i$ having multiplicity $m_i$. Then each multiple root contributes
not only an idempotent projector but also a finite chain of nilpotent operators of length $m_i-1$. In symbolic computations, one often needs an explicit way to construct~\cite{AcusDargys2025} the generalized spectral basis for algebra element $A$ directly from $\mu$, without the use of Jordan matrix form. An example could be Clifford algebra multivector $A$, for which we want to avoid computation of the matrix representation. In the following we assume that polynomial $\mu(x)$, which is associated with  element $A$, is a
minimal polynomial.  For computation of function of $A$, this, however, is not strictly required and an arbitrary polynomial is appropriate. Of course, a polynomial of higher degree would require computational overheads, since the higher powers of $A$ then will appear.

In Sec.~\ref{sec2}, the terminology and the procedure are  introduced.  Sec.~\ref{sec3} and \ref{sec4}, respectively, explain how to find a
generalized spectral basis by recursion and clarifies its structure. A simple  example and concluding remarks are presented in Sec.~\ref{sec6}
and~\ref{sec7}.

\section{General settings and notations}\label{sec2}
We will work in the quotient algebra $ \bbK[x]/\mu$, \textit{i.e.} in the algebra of polynomials modulo the ideal generated by $\mu(x)$. Remind that, two polynomials $f,g\in \bbK[x]$ represent the same element of $\bbK[x]/\mu$ precisely when $f(x)-g(x)$ is divisible by $\mu(x)$. Throughout, we write $f(x)\equiv g(x)\modmu$ to mean equality in this quotient. We assume that $\bbK$ is a field of complex numbers $\bbC$.

The quotient is the natural polynomial model for functional calculus based on the minimal polynomial. If an algebra element $A$ satisfies $\mu(A)=0$, then any polynomial identity modulo $\mu(x)$ immediately becomes a true identity for the algebra element $A$ after substituting $x\mapsto A$. This property enables straightforward computation of function of $A$.

\begin{definition}[Primary and complementary factors]
For each root $\lambda_i$ of multiplicity $m_i$ the factors in 
\begin{equation}\label{PrimaryFactors}
\mu_i(x):=(x-\lambda_i)^{m_i}\quad \textrm{and}\quad 
M_i(x):=\frac{\mu(x)}{\mu_i(x)}=\prod_{\substack{j=1\\j\neq i}}^s (x-\lambda_j)^{m_j}
\end{equation}
are respectively primary and complementary factors of the polynomial $\mu(x)$.
\end{definition}
Since the roots are distinct, $\mu_i$ and $M_i$ are coprime, \textit{i.e.} have no nontrivial common factors. 

The existence of two kinds of factors enables straightforward construction of spectral projectors onto the local natural basis
$1, (x-\lambda_i), (x-\lambda_i)^2, \dots, (x-\lambda_i)^{m-1}$ of root $\lambda_i$. The projection is realised by operators $Q_{i,r}(x)\equiv (x-\lambda_i)^r P_i(x)\modmu$, where $P_i(x):=Q_{i,0}(x)$.

\begin{proposition}[Primary projector]
There exists a unique polynomials $P_i(x)\in \bbK[x]/\mu$ such that
  \begin{equation}\label{PrimaryProjector}
    P_i(x)\equiv 1 \pmod{(x-\lambda_i)^{m_i}},
\qquad
P_i(x)\equiv 0 \pmod{M_i(x)}.
  \end{equation}
\end{proposition}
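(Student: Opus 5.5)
The plan is to treat the statement as an instance of the Chinese Remainder Theorem for the coprime pair $\mu_i(x)=(x-\lambda_i)^{m_i}$ and $M_i(x)$, with $\mu=\mu_i M_i$. Existence and uniqueness will be proved separately, and both rest only on the coprimality noted after the definition of primary and complementary factors.

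For existence, I would first produce a B\'ezout identity. Since $\bbK[x]$ is a principal ideal domain and $\gcd(\mu_i,M_i)=1$ (the two factors have no common root, because $M_i(\lambda_i)=\prod_{j\neq i}(\lambda_i-\lambda_j)^{m_j}\neq 0$), there exist polynomials $a(x),b(x)\in\bbK[x]$ with
\begin{equation*}
a(x)\,\mu_i(x)+b(x)\,M_i(x)=1 .
\end{equation*}
The extended Euclidean algorithm supplies $a$ and $b$ explicitly. I would then set $P_i(x):=b(x)M_i(x)$, reduced modulo $\mu$. This gives $P_i\equiv 0 \pmod{M_i}$ directly. Also $P_i=1-a\mu_i\equiv 1\pmod{\mu_i}$. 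Reducing modulo $\mu$ changes $P_i$ by a multiple of $\mu=\mu_iM_i$, which is divisible by both $\mu_i$ and $M_i$, so both congruences in \eqref{PrimaryProjector} still hold. Hence the class of $P_i$ in $\bbK[x]/\mu$ is well defined.

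For a more constructive version, one could take $b$ to be the truncated Taylor expansion of $1/M_i(x)$ at $x=\lambda_i$ up to order $m_i-1$. This is possible because $M_i(\lambda_i)\neq0$. This choice foreshadows the recursive construction the paper is heading toward, but it is not needed for the proposition.

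For uniqueness, suppose $P$ and $P'$ both satisfy \eqref{PrimaryProjector}. Then $D=P-P'$ satisfies $\mu_i\mid D$ and $M_i\mid D$. Since $\mu_i$ and $M_i$ are coprime, their product divides $D$. I would show this by writing $D=\mu_i q$; then $M_i\mid \mu_i q$, and multiplying the B\'ezout identity by $q$ gives $q=a\mu_i q+bM_iq$, so $M_i\mid q$. Therefore $\mu\mid D$, that is, $P\equiv P'\modmu$.

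None of these steps is a real obstacle. The only point needing care is interpretive: the uniqueness claim holds in the quotient $\bbK[x]/\mu$, or equivalently for the unique representative of degree less than $n=\deg\mu$. It does not hold for polynomials in $\bbK[x]$ themselves, and I would state this explicitly.
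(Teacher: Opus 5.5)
Your proof is correct and takes the same approach as the paper. The paper cites the Chinese remainder theorem isomorphism $\bbK[x]/\mu\cong\bbK[x]/(\mu_i)\times\bbK[x]/(M_i)$ and takes the unique class mapping to $(1,0)$; you prove that instance of the theorem directly, using B\'ezout for existence (surjectivity) and the coprime-divisibility argument for uniqueness (injectivity). Your version yields an explicit formula $P_i=bM_i$ but uses no new idea.
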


\begin{proof}
The proof is based on Chinese remainder theorem.  Since $(x-\lambda_i)^{m_i}$ and $M_i(x)$ are coprime, the Chinese remainder theorem claims that $\mu$ divides a polynomial when it is divisible by both, primary and complementary, factors: $\bbK[x]/\mu \cong \bbK[x]/\bigl((x-\lambda_i)^m\bigr)\times \bbK[x]/(M_i)$.
Therefore, there exists a unique class modulo $\mu$ mapping to $(1,0)$ for each root $\lambda_i$. This is exactly the required polynomial $P_i(x)$.
\end{proof}

\begin{corollary}\label{projectionOperatorProperties}
  The projectors $P_i(x)$ have properties
\begin{align}
  &P_i^2(x)\equiv P_i(x)\modmu&\\
  &P_i(x)P_j(x)\equiv 0\modmu&\textrm{if}\quad \lambda_i\neq \lambda_j\\
  &\sum_{i=1}^s P_i(x)\equiv 1\modmu.
  \end{align}
\end{corollary}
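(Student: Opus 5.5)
The plan is to verify each identity through the Chinese remainder isomorphism used in the Proposition, extended to all $s$ factors. Because the $\mu_j=(x-\lambda_j)^{m_j}$ are pairwise coprime and their product is $\mu$, a polynomial is divisible by $\mu$ exactly when it is divisible by every $\mu_j$, $j=1,\dots,s$. So to show $f\equiv g\modmu$ it suffices to check $f\equiv g \pmod{\mu_j}$ for every $j$. The first step is therefore to record what $P_i$ looks like modulo each primary factor. By the defining congruences, $P_i\equiv 1\pmod{\mu_i}$. For $j\neq i$, the factor $\mu_j$ divides $M_i$, so $P_i\equiv 0 \pmod{M_i}$ gives $P_i\equiv 0\pmod{\mu_j}$. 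In short, $P_i\equiv\delta_{ij}\pmod{\mu_j}$.

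With this in hand, each identity is a componentwise check.

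\textbf{Idempotency.} Modulo $\mu_j$ we have $P_i^2\equiv\delta_{ij}^2=\delta_{ij}\equiv P_i$. Since this holds for every $j$, we get $P_i^2\equiv P_i\modmu$.

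\textbf{Orthogonality.} Take $i\neq j$. Modulo any $\mu_k$, the product is $P_iP_j\equiv\delta_{ik}\delta_{jk}=0$, because $k$ cannot equal both $i$ and $j$. There is also a direct argument: $P_iP_j$ is divisible by $\mu_j$ through the factor $P_i$ (as $\mu_j\mid M_i$). It is divisible by every other $\mu_k$ through the factor $P_j$ (as $\mu_k\mid M_j$ for $k\neq j$). Hence $\mu\mid P_iP_j$.

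\textbf{Completeness.} Modulo $\mu_j$ we get $\sum_i P_i\equiv\sum_i\delta_{ij}=1$ for every $j$, so $\sum_i P_i\equiv 1\modmu$.

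The main obstacle is only the preliminary step, namely justifying the multi-factor Chinese remainder principle. The Proposition states it for the two coprime factors $\mu_i$ and $M_i$. To reach $s$ factors, I would induct on $s$. The key fact is that $\mu_j$ is coprime to $\prod_{k\neq j}\mu_k$, since the roots are distinct. Given that, if $\mu_j$ divides $f$ and $\prod_{k\neq j}\mu_k$ divides $f$, then their product $\mu$ divides $f$ (by Bézout). All remaining steps are routine congruence arithmetic.
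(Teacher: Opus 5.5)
Your proof is correct and follows the paper's route: both verify each identity componentwise via the Chinese remainder theorem and then lift the result to a congruence modulo $\mu$. The one difference is that you split along all $s$ primary factors, using $P_i\equiv\delta_{ij}\pmod{\mu_j}$, rather than only the pair $\mu_i$, $M_i$; that finer split is what the paper's ``orthogonality and partition of unity follow similarly'' actually requires, especially for $\sum_i P_i\equiv 1\modmu$.
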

\begin{proof}
  Modulo $(x-\lambda_i)^{m_i}$, one has $P_i\equiv 1$, so $P_i^2\equiv 1\equiv P_i$. Modulo $M_i$, one has $P_i\equiv 0$, so $P_i^2\equiv 0\equiv P_i$. By uniqueness modulo $\mu$, this implies $P_i^2\equiv P_i\modmu$.
The orthogonality and partition of unity follow similarly.
\end{proof}

\begin{definition}[Generalized spectral basis]\label{defGenSpectralBasis}
  The set of $n=\sum_{i=1}^s m_i$ polynomials 
\begin{equation}
  \left\{
(\lambda_i,\{P_i(x):= Q_{i,0},Q_{i,1},\dots,Q_{i,m_i-1}\})
\right\}_{i=1}^s
\end{equation}
constitute a generalized spectral basis of $\mu(x)$ if they have the properties:
\begin{enumerate}
  \item For different $i\neq j$ indices polynomials $Q_{i,r}$ are pairwise orthogonal  
    \begin{align}
  Q_{i,r}(x)Q_{j,\ell}(x)\equiv 0\modmu&&\text{for all admissible}\quad r,\ell 
\end{align}
$P_i(x)P_j(x)\equiv 0\modmu$ representing a particular case.

\item $P_i= Q_{i,0}$ realises partition of unity:
  \begin{align}
  \sum_{i=1}^s P_i(x)\equiv 1\modmu
\end{align}

\item $Q_{i,r}$ satisfy the multiplication rules:
  \begin{align}
    &Q_{i,r}(x)Q_{i,\ell}(x)\equiv Q_{i,r+\ell}(x)\modmu &&\textrm{if}\quad r+\ell\le m-1\notag \\
    &Q_{i,r}(x)Q_{i,\ell}(x)\equiv 0\modmu &&\textrm{if}\quad r+\ell\ge m 
\end{align}
    including $Q_{i,0}^2(x)\equiv Q_{i,0}(x)\modmu$, {\it i.e.}  $P_{i}^2(x)\equiv P_{i}(x)\modmu$. 
\end{enumerate}
\end{definition}
\begin{remark}
The polynomials $Q_{i,r}(x)$ when $r\ge 1$ are called~\cite{Sobczyk1997} nilpotents (degree of nilpotency $m_i$) that are projectively related to mutually annihilating idempotents $P_k(x)=Q_{k,0}(x)$.
\end{remark}

\section{Recursive computation of generalised spectral basis}
\label{sec3}
Now we present a constructive way to compute the generalised spectral basis.
\begin{theorem}[Recursive generation of generalized spectral basis]
\label{MainTheorem}
Let
\[
\mu(x)=\prod_{i=1}^s (x-\lambda_i)^{m_i}=\sum_{k=0}^{n} c_{n-k}\, x^k
\]
  be a monic, $c_{0}=1$, polynomial over field of characteristic zero. 
  Then a generalized spectral basis can be computed by the following recursive algorithm:
\begin{enumerate}
\item Define the bivariate polynomial $S(x,\lambda)$ 
\[
S(x,\lambda)
=
\sum_{r=0}^{n-1}
\left(
\sum_{k=0}^{n-1-r} c_{n-1-r-k}\lambda^k
\right)x^r.
\]
\item
For each root $\lambda_i$ of multiplicity $m_i$ compute recursively
\[
Q_{i,m_i-1}(x)=\frac{S^{[0]}(x,\lambda_i)}{\mu^{[m_i]}(\lambda_i)}
\]
and for each $r=1,\dots,m_i-1$
\[
Q_{i,m_i-1-r}(x)
=
\frac{
S^{[r]}(x,\lambda_i)
-
\sum_{k=1}^{r}Q_{i,m_i-1-r+k}(x)\mu^{[m_i+k]}(\lambda_i)
}{
\mu^{[m_i]}(\lambda_i)
}
\]
where $\mu^{[r]}(\lambda)=\frac{1}{r!}\frac{d^r}{d\lambda^r}\mu(\lambda)$ and  
    $S^{[r]}(x,\lambda)=\frac{1}{r!}\frac{\partial^r}{\partial \lambda^r}S(x,\lambda)$ for $r\ge 0$ denote normalized derivatives and where $\mu^{[0]}(\lambda)=\mu(x)$, $S^{[0]}(x,\lambda)=S(x,\lambda)$.

\item The resulting family 
\[
\left\{
(\lambda_i,\{Q_{i,0},Q_{i,1},\dots,Q_{i,m_i-1}\})
\right\}_{i=1}^s
\]
constitute the generalized spectral basis.
\end{enumerate}
\end{theorem}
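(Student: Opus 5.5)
The plan is to identify $S(x,\lambda)$ with the divided difference of $\mu$. Then I show that the polynomials $Q_{i,r}(x):=(x-\lambda_i)^rP_i(x)$, built from the primary projector $P_i$ of the Proposition, satisfy the recursion of Theorem~\ref{MainTheorem}. Since $\mu^{[m_i]}(\lambda_i)\neq 0$, the recursion is triangular and has a unique solution, so the algorithm returns exactly these $Q_{i,r}$. It then remains to check that they form a generalized spectral basis in the sense of Definition~\ref{defGenSpectralBasis}.

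\textbf{Step 1 (closed form of $S$).} Write $\mu(x)=\sum_k c_{n-k}x^k$ and use $(x^k-\lambda^k)/(x-\lambda)=\sum_{r+j=k-1}x^r\lambda^j$. Collecting the coefficient of $x^r$ gives
\[
(x-\lambda)\,S(x,\lambda)=\mu(x)-\mu(\lambda).
\]
Apply $\frac{1}{r!}\partial_\lambda^r$ by Leibniz and set $\lambda=\lambda_i$. For $r\ge1$ this gives $(x-\lambda_i)S^{[r]}(x,\lambda_i)-S^{[r-1]}(x,\lambda_i)=-\mu^{[r]}(\lambda_i)$, and for $r=0$ it gives $(x-\lambda_i)S^{[0]}(x,\lambda_i)=\mu(x)$. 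Because $\mu^{[r]}(\lambda_i)=0$ for $r<m_i$, induction on $r$ yields the exact polynomial identity
\[
S^{[r]}(x,\lambda_i)=\frac{\mu(x)}{(x-\lambda_i)^{r+1}}=M_i(x)\,(x-\lambda_i)^{m_i-1-r},\qquad 0\le r\le m_i-1 .
\]

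\textbf{Step 2 (the recursion holds for $Q_{i,r}=(x-\lambda_i)^rP_i$).} The recursion is equivalent to the identity
\[
S^{[r]}(x,\lambda_i)\equiv\sum_{k=0}^{r}\mu^{[m_i+k]}(\lambda_i)\,Q_{i,m_i-1-r+k}(x)\modmu .
\]
I will verify this through the Chinese remainder decomposition used in the proof of the Proposition, checking it separately modulo $M_i$ and modulo $(x-\lambda_i)^{m_i}$.

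Modulo $M_i$, the left side vanishes by Step 1. The right side also vanishes, since $P_i\equiv0$ there.

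Modulo $(x-\lambda_i)^{m_i}$ we have $P_i\equiv1$. Taylor expansion of $\mu$ at $\lambda_i$ gives $\mu(x)=\sum_{k\ge0}\mu^{[m_i+k]}(\lambda_i)(x-\lambda_i)^{m_i+k}$, hence $M_i(x)=\sum_{k\ge0}\mu^{[m_i+k]}(\lambda_i)(x-\lambda_i)^k$. Multiplying by $(x-\lambda_i)^{m_i-1-r}$, the terms with $k>r$ carry powers at least $m_i$ and drop out. What remains is exactly $\sum_{k=0}^r\mu^{[m_i+k]}(\lambda_i)(x-\lambda_i)^{m_i-1-r+k}$, which is the right side.

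The case $r=0$ is the formula for $Q_{i,m_i-1}$. Solving the identity for the $k=0$ term gives the stated recursion. Uniqueness of its solution follows from $\mu^{[m_i]}(\lambda_i)=M_i(\lambda_i)\neq0$.

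\textbf{Step 3 (basis properties).} Orthogonality for $i\neq j$ follows from $P_iP_j\equiv0$, and the partition of unity is Corollary~\ref{projectionOperatorProperties}. For the multiplication rules, $Q_{i,r}Q_{i,\ell}\equiv(x-\lambda_i)^{r+\ell}P_i^2\equiv Q_{i,r+\ell}$. The product $(x-\lambda_i)^{m_i}P_i$ is divisible by both $\mu_i$ and $M_i$, hence by $\mu$, so the product vanishes when $r+\ell\ge m_i$.

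I expect the main obstacle to be Step 1: one must notice that $S$ is the divided difference and derive the exact formula for $S^{[r]}(x,\lambda_i)$. Once that is in hand, the CRT comparison in Step 2 reduces to reading off a Taylor expansion.
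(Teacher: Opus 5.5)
Your proposal is correct and follows essentially the paper's route: the divided-difference form of $S$, the triangular identity, the identification $Q_{i,r}\equiv(x-\lambda_i)^rP_i(x)\modmu$, and then the multiplication, orthogonality and partition-of-unity properties read off from it. The differences are only in execution. You replace the downward induction by checking the candidate $(x-\lambda_i)^rP_i$ against the triangular system via CRT and invoking uniqueness. Your Leibniz-derived closed form $S^{[r]}(x,\lambda_i)=M_i(x)(x-\lambda_i)^{m_i-1-r}$ also makes explicit the divisibility by $M_i$ that the paper only asserts ``by construction''.
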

\begin{remark} Note that for root $\lambda_i$ of multiplicity $m_i$ we have 
$\mu^{[0]}(\lambda_i)=\mu^{[1]}(\lambda_i)=\cdots=\mu^{[m_i-1]}(\lambda_i)=0$, but  
$\mu^{[m_i]}(\lambda_i)\neq 0$.
This is the usual characterization of multiplicity: a root has multiplicity $m_i$ precisely when the first $m_i-1$ derivatives vanish and the $m_i$-th does not. The property also ensures that in the computation process denominators of the expressions in the theorem never vanish. 
\end{remark}
\begin{remark}
The recursion proceeds from the \emph{highest nilpotent level downward}. Reverse order is expected at first sight, but it is natural: the coefficient \(\mu^{[m]}(\lambda)\) is the first nonzero derivative at the repeated root, so it determines the top term, and once that is known, lower terms are recovered one by one.
\end{remark}
We will prove the theorem in few steps. First we will establish useful relations between polynomials $\mu(x)$ and $S(x,\lambda)$ and their derivatives that will help to write down their Taylor expansion. Then using the recurrent relations and the method of induction we will identify Taylor expansion terms with idempotents $P_{i}$ and nilpotents $Q_{i,r}$ that have properties listed in {\bf Definition}~\ref{defGenSpectralBasis}. 

Now we introduce a new expression for polynomial $S(x,\lambda)$. To shorten notation we temporarily remove the root index, {\it i.e.} we simply write $\lambda$ instead of $\lambda_i$, $m$ in the place of $m_i$, and replace $Q_{i,r}$ by $Q_{r}$.
\begin{lemma}\label{SmuLemma}
The polynomial $S(x,\lambda)$ can be written as  
  \begin{equation}\label{SmuRelation}
    S(x,\lambda):=\frac{\mu(x)-\mu(\lambda)}{x-\lambda}.
  \end{equation}
\end{lemma}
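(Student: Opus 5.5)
The plan is a direct algebraic verification: expand $\mu(x)-\mu(\lambda)$ in the monomial basis and divide term by term by $x-\lambda$. Write $\mu(x)=\sum_{j=0}^{n} c_{n-j}x^j$, so that
\[
\mu(x)-\mu(\lambda)=\sum_{j=1}^{n} c_{n-j}\,(x^j-\lambda^j).
\]
The $j=0$ term cancels. Each difference of powers factors by the standard geometric-sum identity
\[
\frac{x^j-\lambda^j}{x-\lambda}=\sum_{r=0}^{j-1} x^r\lambda^{j-1-r},
\]
so the quotient $(\mu(x)-\mu(\lambda))/(x-\lambda)$ is a genuine bivariate polynomial. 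In particular, the right-hand side of (\ref{SmuRelation}) is well defined as a polynomial and not merely a rational function.

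Next I would exchange the order of summation and collect coefficients of $x^r$. We have
\[
\frac{\mu(x)-\mu(\lambda)}{x-\lambda}=\sum_{j=1}^{n}\sum_{r=0}^{j-1} c_{n-j}\,\lambda^{j-1-r}x^r=\sum_{r=0}^{n-1}\Bigl(\sum_{j=r+1}^{n} c_{n-j}\lambda^{j-1-r}\Bigr)x^r .
\]
Substituting $k=j-1-r$, so that $k$ runs from $0$ to $n-1-r$ and $n-j=n-1-r-k$, the inner sum becomes $\sum_{k=0}^{n-1-r}c_{n-1-r-k}\lambda^k$. This is exactly the coefficient of $x^r$ in the definition of $S(x,\lambda)$ in Theorem~\ref{MainTheorem}, which proves the identity.

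The only step that needs care is the index bookkeeping in this reindexing. I would check it on the extreme cases. For $r=n-1$ the coefficient should be $c_0=1$, coming from the leading term $x^n$. For $r=0$ the coefficient should be $\sum_k c_{n-1-k}\lambda^k$. As a further sanity check, setting $\lambda=x$ should give $S(x,x)=\mu'(x)$, which is consistent with the later use of derivatives. No earlier result is needed beyond the definitions.
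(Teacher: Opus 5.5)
Your proof is correct and follows essentially the same route as the paper: expand $\mu(x)-\mu(\lambda)$ in monomials, apply $x^j-\lambda^j=(x-\lambda)\sum_{r=0}^{j-1}x^r\lambda^{j-1-r}$ term by term, and collect powers of $x$ to match the coefficients of $S(x,\lambda)$. Your explicit reindexing $k=j-1-r$ spells out the bookkeeping that the paper compresses into ``collecting the powers of $x$.''
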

\begin{proof}
Taking into account explicit form of polynomial $\mu(x)=c_0 x^n+c_1 x^{n-1}+\cdots+c_{n-1} x+c_n$, 
and  the expression 
$S(x,\lambda) =\sum_{r=0}^{n-1}\left(
\sum_{k=0}^{n-1-r} c_{n-1-r-k}\lambda^k
\right)x^r$, we establish the relation 
  \begin{equation}\label{SmuRelation1}
\mu(x)-\mu(\lambda)=(x-\lambda)S(x,\lambda).
 \end{equation}
Indeed, this explicit form follows from the algebraic identity
\[
x^m-\lambda^m=(x-\lambda)\sum_{k=0}^{m-1} x^{m-1-k}\lambda^k .
\]
Applying it term-by-term and collecting the powers of $x$ gives the relation between polynomials $\mu(x)$ and $S(x,\lambda)$. Solving $S(x,\lambda)$ yields the answer.
\end{proof}

Next, we need an expression relating $Q_{r}$ and normalized derivatives of polynomials $\mu(x)$ and $S(x,\lambda)$. This relation between them plays a key role in the recursion procedure and the induction step.

\begin{lemma}[Triangular identity]
  For $r=0,1,\ldots,m-1$ recursion formulas in step~2 of the {\bf Theorem}~\ref{MainTheorem} can be compactly rewritten as 
  \begin{equation}\label{forInduction}
S^{[r]}(x,\lambda)=\sum_{k=0}^{r} Q_{m-1-r+k}(x)\,\mu^{[m+k]}(\lambda).
\end{equation}
\end{lemma}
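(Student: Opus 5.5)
This lemma is essentially an algebraic rearrangement of the recursion in step~2 of Theorem~\ref{MainTheorem}, so the plan is to verify the equivalence index by index. We treat $r=0$ and $r\ge 1$ separately. Throughout, the root $\lambda$ of multiplicity $m$ is fixed and $\mu^{[m]}(\lambda)\neq 0$, as noted in the remark after the theorem.

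For $r=0$ the sum in \eqref{forInduction} has the single term $k=0$, so the identity reads $S^{[0]}(x,\lambda)=Q_{m-1}(x)\,\mu^{[m]}(\lambda)$. This is exactly the first formula of step~2, $Q_{m-1}=S^{[0]}(x,\lambda)/\mu^{[m]}(\lambda)$, multiplied through by the nonzero denominator.

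For $1\le r\le m-1$ we split off the $k=0$ term of the sum in \eqref{forInduction}:
\[
S^{[r]}(x,\lambda)=Q_{m-1-r}(x)\,\mu^{[m]}(\lambda)+\sum_{k=1}^{r}Q_{m-1-r+k}(x)\,\mu^{[m+k]}(\lambda).
\]
Moving the sum to the left-hand side and dividing by $\mu^{[m]}(\lambda)\neq0$ gives verbatim the recursive formula for $Q_{m-1-r}$ in step~2. Each step is reversible, so for every $r$ the recursion and the corresponding instance of \eqref{forInduction} are equivalent.

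The recursion only ever uses $Q_{m-1-r+k}$ with $k\ge1$, that is, indices above $m-1-r$, which were produced at earlier steps. So the family $\{Q_r\}$ is uniquely determined, and it is the same whether it is defined through step~2 or through the triangular system \eqref{forInduction}. Read as a linear system, \eqref{forInduction} is lower triangular in the unknowns $Q_{m-1},Q_{m-2},\dots,Q_0$, with constant diagonal entries $\mu^{[m]}(\lambda)$. This is where the name comes from, and it is what makes the system uniquely solvable.

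There is no real obstacle. The only points needing care are the index bookkeeping, namely that $m-1-r+k$ runs from $m-1-r$ up to $m-1$, and the use of $\mu^{[m]}(\lambda)\neq0$ to justify the divisions. The identities hold as exact polynomial identities in $x$; no reduction modulo $\mu$ is needed at this stage.
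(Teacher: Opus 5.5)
Your proposal is correct and matches the paper's proof. You handle $r=0$ directly, and for $r\ge1$ you rearrange the recursion by multiplying through by $\mu^{[m]}(\lambda)$ and moving the sum across; you run this in the reverse direction and note it is reversible since $\mu^{[m]}(\lambda)\neq0$. Your extra remarks on the lower-triangular structure and on the identities holding exactly as polynomials in $x$ are accurate but not needed for the lemma.
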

\begin{proof}
For $r=0$ the formula for $Q_{m-1}$ is $S^{[0]}(x,\lambda)=Q_{m-1}(x)\mu^{[m]}(\lambda)$. 
For $r\ge 1$, multiply the recursive formula by $\mu^{[m]}(\lambda)$ and move the sum to the other side. This gives
\[
S^{[r]}(x,\lambda)
=
Q_{m-1-r}(x)\mu^{[m]}(\lambda)
+
\sum_{k=1}^{r}Q_{m-1-r+k}(x)\mu^{[m+k]}(\lambda),
\]
  which is precisely the triangular identity \eqref{forInduction}, that expresses each normalized derivative $S^{[r]}$ as a linear combination of the unknown basis polynomials $Q_j$, with coefficients of higher normalized derivatives of $\mu(\lambda)$.
\end{proof}

Now we are ready to write down Taylor expansion of $\mu(x)$ and $S(x,\lambda)$ around root of multiplicity $m$.
\begin{lemma}[Taylor series of $\mu$ at a repeated root]
For a root $\lambda$ of multiplicity $m$,
  \begin{equation}\label{TaylorMu}
\mu(x)=\sum_{j=m}^{n}\mu^{[j]}(\lambda)(x-\lambda)^j
=
(x-\lambda)^m \sum_{r=0}^{n-m}\mu^{[m+r]}(\lambda)(x-\lambda)^r.
\end{equation}
\begin{proof}
This is simply the Taylor expansion of $\mu$ at $\lambda$, together with the fact that the first $m-1$ normalized derivatives vanish.
\end{proof}
\end{lemma}

\begin{corollary}
From Eq.~\eqref{PrimaryFactors} for complementary factor  we have
\[
M_\lambda(x):=\frac{\mu(x)}{(x-\lambda)^m} = \sum_{r=0}^{n-m}\mu^{[m+r]}(\lambda)(x-\lambda)^r .
\]
\end{corollary}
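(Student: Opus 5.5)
The corollary follows directly from the Taylor expansion lemma just proved. I will divide identity \eqref{TaylorMu} by the primary factor $(x-\lambda)^m$ and identify the quotient with the complementary factor of Eq.~\eqref{PrimaryFactors}.

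First, I take \eqref{TaylorMu} in its factored form,
\[
\mu(x)=(x-\lambda)^m\sum_{r=0}^{n-m}\mu^{[m+r]}(\lambda)(x-\lambda)^r .
\]
This is an exact identity in $\bbK[x]$, not merely a congruence modulo $\mu$. It is a polynomial of degree $n$, so its Taylor series at $\lambda$ terminates at $j=n$.

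Second, I note that $\bbK[x]$ is an integral domain. Division by the nonzero polynomial $(x-\lambda)^m$ is therefore exact and unique. In Eq.~\eqref{PrimaryFactors} the complementary factor $M_i$ is defined as the unique polynomial with $\mu=\mu_i M_i$, where $\mu_i=(x-\lambda_i)^{m_i}$. Cancelling $(x-\lambda)^m$ from both sides of the displayed identity gives the claimed formula for $M_\lambda(x)$.

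Third, as a consistency check, I evaluate the sum at $x=\lambda$. This gives $M_\lambda(\lambda)=\mu^{[m]}(\lambda)$, which is nonzero by the remark on multiplicities. This agrees with $M_\lambda(\lambda)=\prod_{j\ne i}(\lambda_i-\lambda_j)^{m_j}\neq 0$ and reflects the coprimality of $\mu_i$ and $M_i$.

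There is no real obstacle here. The only care needed is to treat the expression as a polynomial identity rather than a congruence, so that cancellation is legitimate. This form of $M_\lambda$ will presumably be combined with the triangular identity \eqref{forInduction} and Lemma~\ref{SmuLemma}: expanding $S(x,\lambda')$ in $\lambda'$ around $\lambda$, the coefficients $\mu^{[m+k]}(\lambda)$ appear as the Taylor coefficients of $M_\lambda$.
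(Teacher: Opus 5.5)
Your proposal is correct and follows the same route as the paper, which leaves the corollary without a written proof as an immediate consequence of the preceding lemma: divide the factored Taylor expansion \eqref{TaylorMu} by $(x-\lambda)^m$ and recognize the quotient as the complementary factor of Eq.~\eqref{PrimaryFactors}. Your two additions are sound: treating the expansion as an exact identity in $\bbK[x]$, so that cancelling $(x-\lambda)^m$ is legitimate, and the check $M_\lambda(\lambda)=\mu^{[m]}(\lambda)\neq 0$.
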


\begin{lemma}[Expansion of derivatives of $S(x,\lambda)$ around a repeated root]
For every $r\ge 0$,
  \begin{equation}\label{derivativeSa}
S^{[r]}(x,\lambda) = \sum_{j=r+1}^{n}\mu^{[j]}(\lambda)(x-\lambda)^{j-r-1}
  \end{equation}
or equivalently,
  \begin{equation}\label{derivativeSb}
S^{[r]}(x,\lambda)=\sum_{t\ge 0}\mu^{[r+1+t]}(\lambda)(x-\lambda)^t.
  \end{equation}
\end{lemma}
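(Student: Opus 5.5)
The plan is to start from Lemma~\ref{SmuLemma}, which gives $S(x,\lambda)=\frac{\mu(x)-\mu(\lambda)}{x-\lambda}$. We then expand $\mu(x)$ in a Taylor series about the point $\lambda$ itself, treating $\lambda$ as a free variable rather than a fixed root. Since $\mu$ is a polynomial of degree $n$, the expansion is exact:
\[
\mu(x)=\sum_{j=0}^{n}\mu^{[j]}(\lambda)(x-\lambda)^j .
\]
The $j=0$ term equals $\mu(\lambda)$ and cancels in the numerator. Dividing by $x-\lambda$ then gives the polynomial identity
\[
S(x,\lambda)=\sum_{j=1}^{n}\mu^{[j]}(\lambda)(x-\lambda)^{j-1},
\]
valid identically in both variables $x$ and $\lambda$. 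This is exactly \eqref{derivativeSa} for $r=0$.

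Next, I would apply $\frac{1}{r!}\partial_\lambda^r$ to this identity. The subtle point is that each term is a product of $\mu^{[j]}(\lambda)$ and $(x-\lambda)^{j-1}$, and both factors depend on $\lambda$. Differentiating termwise by the Leibniz rule is therefore messy.

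A cleaner route avoids it. Write $\mu(x)-\mu(\lambda)=(x-\lambda)S(x,\lambda)$ and differentiate $r+1$ times in $\lambda$, normalized by $1/(r+1)!$. The left side gives $-\mu^{[r+1]}(\lambda)$. The Leibniz rule on the right, using $\frac{1}{k!}\partial_\lambda^k(x-\lambda)=0$ for $k\ge2$, gives $(x-\lambda)S^{[r+1]}-S^{[r]}$. This yields the recursion
\[
S^{[r]}(x,\lambda)=\mu^{[r+1]}(\lambda)+(x-\lambda)S^{[r+1]}(x,\lambda).
\]

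Iterating the recursion from $r$ upward, it terminates because $S^{[k]}\equiv0$ once $k\ge n$: $S$ has degree $n-1$ in $\lambda$. The iteration telescopes to
\[
S^{[r]}=\sum_{t\ge0}\mu^{[r+1+t]}(\lambda)(x-\lambda)^t,
\]
which is \eqref{derivativeSb}. Reindexing with $j=r+1+t$ gives \eqref{derivativeSa}, the upper limit being $n$ because $\mu^{[j]}=0$ for $j>n$.

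Alternatively, the recursion could be replaced by a direct check. Writing $y=x-\lambda$ and $x=\lambda+y$, Taylor's formula says $\frac{1}{r!}\partial_\lambda^r$ of $\mu(\lambda)$ equals the coefficient of $h^r$ in $\mu(\lambda+h)$. Applying this to $S(x,\lambda)=\frac{\mu(x)-\mu(\lambda)}{x-\lambda}$ with $x$ fixed gives a generating-function verification.

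The main obstacle is bookkeeping: making sure the normalized-derivative Leibniz rule is applied with the correct factorials. I would state the $\frac{1}{k!}$-normalized Leibniz rule explicitly before using it. I would also stress that the identities hold for arbitrary $\lambda$, not only at roots. The multiplicity information is only used later, where at a root of multiplicity $m$ the terms with $j<m$ vanish in the subsequent lemmas.
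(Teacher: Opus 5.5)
Your proof is correct, but at the key step it takes a different route from the paper. Both arguments start from the $r=0$ identity $S(x,\lambda)=\sum_{j\ge1}\mu^{[j]}(\lambda)(x-\lambda)^{j-1}$, which comes from the full Taylor expansion of $\mu$ about $\lambda$.

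The paper then only says that taking normalized $\lambda$-derivatives of this series yields \eqref{derivativeSa}. Done termwise, that step needs $\frac{d}{d\lambda}\mu^{[k]}(\lambda)=(k+1)\mu^{[k+1]}(\lambda)$ together with a telescoping cancellation against the derivatives of $(x-\lambda)^{j-1}$. Equivalently, one needs an induction checking that the right-hand side $R_r$ of \eqref{derivativeSb} satisfies $\partial_\lambda R_r=(r+1)R_{r+1}$. The paper leaves this unwritten.

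You instead differentiate the product $\mu(x)-\mu(\lambda)=(x-\lambda)S(x,\lambda)$. There the normalized Leibniz rule keeps only two terms, because $x-\lambda$ is linear in $\lambda$. This gives $S^{[r]}=\mu^{[r+1]}(\lambda)+(x-\lambda)S^{[r+1]}$, and you close by termination since $\deg_\lambda S=n-1$. Iterating from $r=0$ even reproduces the Taylor step, so that step and the generating-function aside are not needed.

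The paper's version is shorter; yours is complete. Your emphasis that the identities hold for all $\lambda$ also repairs a real blemish. The paper cites \eqref{TaylorMu}, which holds only at a root of multiplicity $m$, so it cannot by itself be differentiated in $\lambda$.

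Your recursion has a useful by-product as well. At a root $\lambda$ of multiplicity $m$ it gives $S^{[r]}(x,\lambda)=(x-\lambda)^{m-1-r}M_\lambda(x)$ for $0\le r\le m-1$. This is the divisibility by the complementary factor that the identification lemma later uses only implicitly.
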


\begin{proof}
Using Eq.~\eqref{SmuRelation1} we expand $\mu(x)$ in Taylor series at $\lambda$ using~\eqref{TaylorMu}. Then 
  subtraction of $\mu(\lambda):=\mu^{[0]}(\lambda)$ and division by $(x-\lambda)$ gives
\[
S(x,\lambda)=\sum_{j\ge 1}\mu^{[j]}(\lambda)(x-\lambda)^{j-1}.
\]
Taking normalized derivatives with respect to $\lambda$ yields~\eqref{derivativeSa}. 
Reindexing gives~\eqref{derivativeSb}.
\end{proof}
\begin{remark}
  Formula \eqref{derivativeSb} is very important. It says that the $r$-th normalized derivative $S^{[r]}$ starts with the coefficient $\mu^{[r+1]}(\lambda)$, and continues with higher powers of $(x-\lambda)$. Since $\mu^{[1]}(\lambda),\ldots,\mu^{[m-1]}(\lambda)$ vanish, this means that each $S^{[r]}$ begins only at the relevant primary level.
\end{remark}

Now we restore root index $\lambda_i$, identify Taylor expansion terms with polynomials $Q_{i,r}$, and show that the recursive construction recovers the expected basis in the $\lambda_i$-primary component.

\begin{lemma}[Identification of $Q_{i,r}$ with powers of $x-\lambda_i$]\label{identificationLemma}
Let $P_i(x)$ be the projector onto the $\lambda_i$-primary component. Then the recursively defined polynomials satisfy
  \begin{equation}\label{identificationFormula}
    Q_{i,r}(x)\equiv (x-\lambda_i)^r P_i(x)\modmu,
\qquad r=0,1,\dots,m_i-1.
\end{equation}
\end{lemma}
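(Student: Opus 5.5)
Fix a root $\lambda=\lambda_i$ of multiplicity $m=m_i$ and write $M(x)=M_\lambda(x)=\sum_{t\ge 0}\mu^{[m+t]}(\lambda)(x-\lambda)^t$, as in the Corollary to the Taylor lemma; note $M(\lambda)=\mu^{[m]}(\lambda)\neq 0$. The plan is to avoid the recursion itself and instead show that the candidate polynomials $\widetilde Q_r(x):=(x-\lambda)^rP_i(x)$ satisfy the triangular identity \eqref{forInduction} modulo $\mu$. The triangular system is lower triangular in the unknowns $Q_{m-1},Q_{m-2},\dots,Q_0$, with nonzero diagonal coefficient $\mu^{[m]}(\lambda)$. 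Hence its solution modulo $\mu$ is unique, and by induction on $r$ (top level first) it follows that $Q_r\equiv\widetilde Q_r\modmu$.

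\textbf{Step 1: the right-hand side with the candidates.} Substitute $\widetilde Q_{m-1-r+k}=(x-\lambda)^{m-1-r+k}P_i$ into the right-hand side of \eqref{forInduction}. This gives
\[
(x-\lambda)^{m-1-r}P_i(x)\sum_{k=0}^{r}\mu^{[m+k]}(\lambda)(x-\lambda)^k .
\]
Next extend the sum over $k$ to all $k\ge 0$. The added terms carry the factor $(x-\lambda)^{m-1-r+k}$ with $k\ge r+1$, so this factor is divisible by $(x-\lambda)^m$. Multiplied by $P_i$, which is divisible by $M$, these terms are divisible by $\mu$ and vanish modulo $\mu$. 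The right-hand side is therefore congruent to $(x-\lambda)^{m-1-r}M(x)P_i(x)$.

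\textbf{Step 2: the left-hand side.} By \eqref{derivativeSb}, $S^{[r]}(x,\lambda)=\sum_{t\ge0}\mu^{[r+1+t]}(\lambda)(x-\lambda)^t$. Since $\mu^{[j]}(\lambda)=0$ for $j<m$, the terms with $t<m-1-r$ drop out. Reindexing with $t=m-1-r+k$ then gives exactly $S^{[r]}(x,\lambda)=(x-\lambda)^{m-1-r}M(x)$ as an honest polynomial identity.

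\textbf{Step 3: removing $P_i$.} It remains to show $(x-\lambda)^{m-1-r}M\equiv(x-\lambda)^{m-1-r}M P_i\modmu$, that is, that $(x-\lambda)^{m-1-r}M(1-P_i)$ is divisible by $\mu$. By the Chinese remainder argument it suffices to check divisibility by $(x-\lambda)^m$ and by $M$ separately. Divisibility by $M$ is obvious. For the factor $(x-\lambda)^m$, use $1-P_i\equiv 0 \pmod{(x-\lambda)^m}$. Thus the candidates satisfy \eqref{forInduction}.

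\textbf{Step 4: uniqueness.} Run the induction from $r=0$ upward. At stage $r$, all $Q_j$ with $j>m-1-r$ already agree with $\widetilde Q_j$ modulo $\mu$. Subtracting the two versions of \eqref{forInduction} then gives $\mu^{[m]}(\lambda)\bigl(Q_{m-1-r}-\widetilde Q_{m-1-r}\bigr)\equiv 0$, and dividing by the nonzero constant gives the claim.

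The main obstacle is the bookkeeping in Steps 1 and 2: truncating the Taylor sum at $k=r$ versus taking the full sum for $M$. It must be checked carefully that the discarded tail is genuinely killed modulo $\mu$ only after multiplication by $P_i$. This is why the identity holds only modulo $\mu$, and not as a polynomial identity, for the $\widetilde Q$.
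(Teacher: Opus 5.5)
Your proof is correct and follows essentially the paper's route: substitute $(x-\lambda_i)^rP_i$ into the triangular recursion, use the expansion \eqref{derivativeSb}, and induct downward from $Q_{i,m_i-1}$ using the nonzero pivot $\mu^{[m_i]}(\lambda_i)$. Your explicit identity $S^{[r]}(x,\lambda_i)=(x-\lambda_i)^{m_i-1-r}M_i(x)$, together with your observation that the Taylor tail $k>r$ vanishes modulo $\mu$ only after multiplication by $P_i$, makes rigorous the step the paper glosses over when it says that ``what remains'' after the subtraction is $\mu^{[m_i]}(\lambda_i)(x-\lambda_i)^{m_i-1-r}P_i(x)$.
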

\begin{proof}
We prove the statement by induction on $r$, starting from $r=m_i-1$.
\medskip
\noindent
\textbf{Verification step for $r=m_i-1$.}\newline
By definition $Q_{i,m_i-1}(x)=\frac{S^{[0]}(x,\lambda_i)}{\mu^{[m_i]}(\lambda_i)}$. 
Using formula \eqref{derivativeSb} with $r=0$,
\[
S^{[0]}(x,\lambda_i)
=
\mu^{[m_i]}(\lambda_i)(x-\lambda_i)^{m_i-1}
+
\mu^{[m_i+1]}(\lambda_i)(x-\lambda_i)^{m_i}
+\cdots.
\]
Dividing by \(\mu^{[m_i]}(\lambda_i)\), we obtain
\[
  Q_{i,m_i-1}(x)\equiv (x-\lambda_i)^{m_i-1}\pmod{(x-\lambda_i)^{m_i}}.
\]
Also, by construction $Q_{i,m_i-1}$ lies in the $\lambda_i$-primary component, hence it vanishes modulo the complementary factor $M_i(x)$. Therefore
\[
Q_{i,m_i-1}(x)\equiv (x-\lambda_i)^{m_i-1}P_i(x)\modmu.
\]

\medskip
\noindent
\textbf{Induction step.}
Assume that for some $r\in\{1,\ldots,m_i-1\}$,
\[
  Q_{i,m_i-r}(x)\equiv (x-\lambda_i)^{m_i-r}P_i(x),\quad
Q_{i,m_i-r+1}(x)\equiv (x-\lambda_i)^{m_i-r+1}P_i(x),\quad \ldots
\]
up to $Q_{i,m_i-1}(x)$. We must prove
\[
Q_{i,m_i-1-r}(x)\equiv (x-\lambda_i)^{m_i-1-r}P_i(x)\modmu.
\]

The recursion in {\bf Theorem}~\ref{MainTheorem} is defined as
$$
Q_{i,m_i-1-r}(x) =
\frac{
  S^{[r]}(x,\lambda_i)-\sum_{k=1}^{r}Q_{i,m_i-1-r+k}(x)\mu^{[m_i+k]}(\lambda_i)
}{\mu^{[m_i]}(\lambda_i)}.
$$
Now use expansion~\eqref{derivativeSb},
\[
S^{[r]}(x,\lambda_i)=\mu^{[m_i]}(\lambda_i)(x-\lambda_i)^{m_i-1-r}
+
\sum_{k=1}^{\infty}\mu^{[m_i+k]}(\lambda_i)(x-\lambda_i)^{m_i-1-r+k}.
\]
By induction hypothesis,
\[
Q_{i,m_i-1-r+k}(x)\equiv (x-\lambda_i)^{m_i-1-r+k}P_i(x)\modmu.
\]
We see that the subtraction in the numerator removes precisely the higher-order terms coming from
\[
\mu^{[m_i+k]}(\lambda_i)(x-\lambda_i)^{m_i-1-r+k},
\qquad k=1,\ldots,r.
\]
What remains is
$$
\mu^{[m_i]}(\lambda_i)(x-\lambda_i)^{m_i-1-r}P_i(x)\quad \textrm{modulo}\quad \mu.
$$ 
Dividing by $\mu^{[m_i]}(\lambda_i)\neq 0$ gives
\[
Q_{i,m_i-1-r}(x)\equiv (x-\lambda_i)^{m_i-1-r}P_i(x)\modmu.
\]
This completes the induction.
\end{proof}
\begin{remark}
This lemma is the centerpiece of the theorem proof. The recursion looks somewhat complicated, but it simply reconstructs the natural powers
\[
P_i,\quad (x-\lambda_i)P_i,\quad (x-\lambda_i)^2P_i,\quad \ldots,\quad (x-\lambda_i)^{m_i-1}P_i
\]
in a way that is compatible with reduction modulo $\mu$.
\end{remark}

\section{Multiplicative structure of generalised spectral basis}\label{sec4}

Once identification with projector operators~\eqref{identificationFormula} is established, the multiplication rules for $Q_{i,r}(x)$ become transparent. 

\begin{proposition}[Multiplication rules for the same root]
  Let $Q_{i,0},\ldots,Q_{i,m_i-1}$ be the recursively constructed polynomials associated with the root $\lambda_i$. Then:
  \begin{enumerate}[I.]
\item $Q_{i,0}^2(x)\equiv Q_{i,0}(x)\modmu$
\item $Q_{i,0}(x)Q_{i,r}(x)\equiv Q_{0,r}(x)\modmu,\qquad r=0,\ldots,m_i-1$
\item
  If $r+\ell\le m_i-1$, then $Q_{i,r}(x)Q_{i,\ell}(x)\equiv Q_{i,r+\ell}(x)\modmu$
\item If $r+\ell\ge m_i$, then $Q_{i,r}(x)Q_{i,\ell}(x)\equiv 0\modmu$
\end{enumerate}
\end{proposition}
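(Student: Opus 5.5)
The plan is to reduce every statement to Lemma~\ref{identificationLemma}, which gives $Q_{i,r}(x)\equiv (x-\lambda_i)^r P_i(x)\modmu$ for $r=0,\dots,m_i-1$. Then we use the idempotency $P_i^2\equiv P_i\modmu$ from Corollary~\ref{projectionOperatorProperties}. Reduction modulo $\mu$ is a ring homomorphism $\bbK[x]\to\bbK[x]/\mu$, so we may multiply congruences freely. (Item II contains the typo $Q_{0,r}$; we read it as $Q_{i,r}$.)

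\textbf{Items I--III.} For $r,\ell\le m_i-1$ we compute
\[
Q_{i,r}Q_{i,\ell}\equiv (x-\lambda_i)^{r}P_i\,(x-\lambda_i)^{\ell}P_i=(x-\lambda_i)^{r+\ell}P_i^2\equiv (x-\lambda_i)^{r+\ell}P_i \modmu .
\]
When $r+\ell\le m_i-1$, Lemma~\ref{identificationLemma} applied at index $r+\ell$ turns the right-hand side back into $Q_{i,r+\ell}$. This gives III. Items I and II are the special cases $r=\ell=0$ and $\ell=0$.

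\textbf{Item IV.} This is the only step needing more than bookkeeping. We must show that $(x-\lambda_i)^{r+\ell}P_i\equiv 0\modmu$ whenever $r+\ell\ge m_i$. Since $(x-\lambda_i)^{r+\ell}$ is a multiple of $(x-\lambda_i)^{m_i}$, it suffices to show $(x-\lambda_i)^{m_i}P_i\equiv 0\modmu$. For this we use the defining property \eqref{PrimaryProjector}: $M_i(x)$ divides $P_i(x)$. Hence $\mu=(x-\lambda_i)^{m_i}M_i$ divides $(x-\lambda_i)^{m_i}P_i$.

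Equivalently, one can use the Chinese remainder decomposition from the proof of the Proposition on the primary projector. Modulo $(x-\lambda_i)^{m_i}$ the product vanishes because of the first factor. Modulo $M_i$ it vanishes because $P_i\equiv0$ there. Coprimality then gives vanishing modulo $\mu$.

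The one subtlety is that Lemma~\ref{identificationLemma} is itself what guarantees the recursively built $Q_{i,r}$ lie in the $\lambda_i$-primary component. We take it as established, so no further obstacle remains.
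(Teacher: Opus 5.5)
Your proof is correct and follows the paper's argument. Like the paper, you substitute the identification $Q_{i,r}\equiv(x-\lambda_i)^rP_i$, use $P_i^2\equiv P_i$, split on $r+\ell\le m_i-1$ versus $r+\ell\ge m_i$, and treat I and II as special cases of III. Your justification of IV via $M_i\mid P_i$, so that $\mu=(x-\lambda_i)^{m_i}M_i$ divides $(x-\lambda_i)^{m_i}P_i$, spells out a divisibility step the paper leaves implicit.
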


\begin{proof}
We prove  most general multiplication rules (iii) and (iv), since the statements (i) and (ii) are particular cases of (iii). After substitution of~\eqref{identificationFormula} we get
\[
  Q_{i,r}(x)Q_{i,\ell}(x)
\equiv
(x-\lambda_i)^rP_i(x)\,(x-\lambda_i)^\ell P_i(x)
=
(x-\lambda_i)^{r+\ell}P_i^2(x).
\]
Since $P^2_i\equiv P_i\modmu$, we have $Q_{i,r}(x)Q_{i,\ell}(x)\equiv (x-\lambda_i)^{r+\ell}P_i(x)\modmu$.
\newline\noindent
If $r+\ell\le m_i-1$, then {\bf Lemma}~\ref{identificationLemma} gives
\[
(x-\lambda_i)^{r+\ell}P_i(x)\equiv Q_{i,r+\ell}(x)\modmu.
\]
  If \(r+\ell\ge m_i\), then $(x-\lambda_i)^{r+\ell}$ is divisible by $(x-\lambda_i)^{m_i}$, hence
\[
  (x-\lambda_i)^{r+\ell}P_i(x)\equiv 0\modmu.
\]
\end{proof}

\begin{proposition}[Orthogonality between different roots]
If $i\neq j$, then
\[
Q_{i,r}(x)Q_{j,\ell}(x)\equiv 0\modmu
\]
for all admissible \(r,\ell\).
\end{proposition}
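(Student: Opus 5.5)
The plan is to reduce the statement to the orthogonality of the primary projectors, using the identification of Lemma~\ref{identificationLemma}. That lemma gives, for every admissible $r$ and $\ell$,
\[
Q_{i,r}(x)\equiv (x-\lambda_i)^rP_i(x)\modmu,\qquad Q_{j,\ell}(x)\equiv (x-\lambda_j)^\ell P_j(x)\modmu .
\]
Congruence modulo $\mu$ is compatible with multiplication in $\bbK[x]/\mu$, and polynomials commute. Hence
\[
Q_{i,r}(x)Q_{j,\ell}(x)\equiv (x-\lambda_i)^r(x-\lambda_j)^\ell\,P_i(x)P_j(x)\modmu .
\]

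The second step invokes Corollary~\ref{projectionOperatorProperties}. Since $i\neq j$ means $\lambda_i\neq\lambda_j$, it gives $P_i(x)P_j(x)\equiv 0\modmu$. Therefore the whole product is congruent to zero modulo $\mu$, because the ideal generated by $\mu$ absorbs the extra polynomial factor $(x-\lambda_i)^r(x-\lambda_j)^\ell$.

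The only real content lies in the orthogonality $P_iP_j\equiv0$, whose proof the paper labels ``follows similarly''. I would make it explicit by a divisibility check:
\begin{itemize}
\item $P_i\equiv 0\pmod{M_i}$ and $(x-\lambda_j)^{m_j}$ divides $M_i$, so $(x-\lambda_j)^{m_j}$ divides $P_i$, and hence divides $P_iP_j$.
\item For every $k\neq j$, the factor $(x-\lambda_k)^{m_k}$ divides $M_j$, which divides $P_j$, and hence divides $P_iP_j$.
\end{itemize}
Thus every primary factor of $\mu$ divides $P_iP_j$. Since these factors are pairwise coprime, their product $\mu$ divides $P_iP_j$.

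An alternative avoids the identification lemma altogether. Each $Q_{j,\ell}$ is built from the $S^{[r]}(x,\lambda_j)$, and from Lemma~\ref{SmuLemma} and Eq.~\eqref{derivativeSb} one sees that these are divisible by $M_j$. Then $Q_{i,r}Q_{j,\ell}$ is divisible by both $M_i$ and $M_j$, hence by their least common multiple, which is $\mu$. I would mention this only as a remark.

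The main potential obstacle is that Lemma~\ref{identificationLemma} silently uses the claim that each $Q_{i,r}$ ``lies in the $\lambda_i$-primary component''. Since the proposition rests on that lemma, I accept it as established earlier. If needed, the divisibility by $M_i$ of the $S^{[r]}(x,\lambda_i)$, and therefore of the $Q_{i,r}$ through the triangular recursion, can be verified directly from $\mu(x)=(x-\lambda_i)^{m_i}M_i(x)$.
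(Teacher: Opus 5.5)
Your main argument is correct and is the paper's own proof: substitute $Q_{i,r}\equiv(x-\lambda_i)^rP_i$ and $Q_{j,\ell}\equiv(x-\lambda_j)^\ell P_j$ from Lemma~\ref{identificationLemma}, then use $P_iP_j\equiv 0\modmu$ from Corollary~\ref{projectionOperatorProperties}. Your divisibility check (each primary factor $(x-\lambda_k)^{m_k}$ divides $P_iP_j$, so $\mu$ does) correctly supplies the orthogonality step that the paper dismisses as ``follows similarly''.
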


\begin{proof}
From expression~\eqref{identificationFormula}, 
$
Q_{i,r}(x)\equiv (x-\lambda_i)^r P_i(x)\modmu$ and 
$
Q_{j,\ell}(x)\equiv (x-\lambda_j)^\ell P_j(x)\modmu$.
Therefore,
\[
Q_{i,r}(x)Q_{j,\ell}(x)
\equiv
(x-\lambda_i)^r(x-\lambda_j)^\ell P_i(x)P_j(x)\modmu.
\]
By {\bf Corollary}~\ref{projectionOperatorProperties}, $P_iP_j\equiv 0\modmu$ for $i\neq j$, so the product vanishes.
Since $P_i$ are projection operators, the corollary also guarantee partition of unity property.
\end{proof}

This finishes the proof of the {\bf Theorem}~\ref{MainTheorem}, since we explicitly checked that the constructed polynomials $Q_{i,r}$ satisfy all properties of a generalised spectral basis. 

\subsection{Interpretation for functional calculus and application to computation of analytic function}
\label{sec4b}

Apart from being algebraically elegant the {\bf Theorem}~\ref{MainTheorem} is exactly what is needed in practice.
Suppose $A$ is an element of an associative algebra such that $\mu(A)=0$.
Then we may substitute \(x\mapsto A\) into all polynomials above, and define
\[
  P_i:=Q_{i,0}(A),\qquad N_{i,r}:=Q_{i,r}(A)\quad \textrm{for}\quad r\ge 1.
\]
The theorem implies
\[
\sum_{i=1}^s P_i = 1,
\qquad
P_iP_j=\delta_{ij}P_i
\]
and within each $i$-th block,
\begin{align}
&P_i N_{i,r}=N_{i,r},\quad
  N_{i,r}N_{i,\ell}=N_{i,r+\ell}&&\textrm{for}\quad r+\ell\le m_i-1,\\
  &N_{i,r}N_{i,\ell}=0&&\textrm{for}\quad r+\ell\ge m_i.
\end{align}

Thus the recursively generated polynomials recover, in a coordinate-free way, the same structure that one ordinarily associates with Jordan blocks.

If $f(x)$ is analytic in a neighborhood of the spectrum, then one obtains~\cite{Sobczyk1997} the standard generalized spectral formula
\begin{equation}\label{FormulaForFunction}
f(A)
=
\sum_{i=1}^s
\sum_{r=0}^{m_i-1} f^{[r]}(\lambda_i)\,Q_{i,m_i-1-r}(A).
\end{equation}
For \(f(x)=e^x\), this becomes
\[
e^A
=
\sum_{i=1}^s e^{\lambda_i}
\sum_{r=0}^{m_i-1}\frac{1}{r!}Q_{i,m_i-1-r}(A).
\]
This is precisely why the recursive construction is so useful computationally. It transforms a non-diagonalizable functional calculus problem into a finite algebraic computation with polynomials and derivatives.

\section{Application to defective $6\times6$ Maxwell Equation}\label{sec6}
In physics and engineering, non-diagonal matrices have found a new and promising application in micro- and nanophotonic devices. The spectral singularities here are called exceptional points \cite{Miri2019}. In the model described below a defective matrix arises naturally, without imposing special conditions on model parameters.

\subsection{Model formulation}
Consider a source-free homogeneous medium characterized by field relations
\begin{equation*}
  \mathbf D=\varepsilon_0\boldsymbol{\varepsilon}_r\mathbf E,
  \quad
  \mathbf B=\mu_0\boldsymbol{\mu}_r\mathbf H
\end{equation*}
where $\mathbf D$ and $\mathbf B$ are electric and magnetic inductions, whereas $\mathbf E$ and $\mathbf H$ are electric and magnetic fields. Assume that electrical relative-permittivity inverse tensor is
\begin{equation}\label{Permittivity}
  \boldsymbol{\varepsilon}_r^{-1}=\mathbf A=
  \begin{pmatrix}
    \alpha & g & 0\\
    0 & \alpha & 0\\
    0 & 0 & \beta
  \end{pmatrix},
  \qquad \alpha\beta g\ne0
\end{equation}
and relative magnetic permeability tensor  $\boldsymbol{\mu}_r$ is a unit matrix. The off-diagonal entry describes rotation of electric field
polarization in $x-y$ plane. Eq.~\eqref{Permittivity} should be regarded as an idealized  model of anisotropic dielectric.

Let's introduce normalized fields,
\begin{equation*}
  \mathbf{e}=\sqrt{\varepsilon_0}\,\mathbf E,
  \qquad
  \hhv=\sqrt{\mu_0}\,\mathbf H
\end{equation*}
and consider a plane wave  $\exp(\ii kz)$ that propagates in $z$-direction and is characterised by polarization vector $\mathbf{v}$.  Define polarization
matrix
\begin{equation*}
  \mathsf C=
  \begin{pmatrix}
    0&-1&0\\
    1& 0&0\\
    0& 0&0
  \end{pmatrix},
  \quad
  \mathsf C\mathbf v=\ez\times\mathbf v,\quad\mathbf{e}_3||z\, .
\end{equation*}
Then  electromagnetic field propagation equation takes a form of a system of coupled Schr\"odinger-like equations,
\begin{equation}\label{ModelEquations}
  \ii\frac{\partial}{\partial t}
  \Bigl(\begin{matrix}\mathbf{e}\\\,\hhv\,\end{matrix}\Bigr)
  =H_M
  \Bigl(\begin{matrix}\mathbf{e}\\\,\hhv\,\end{matrix}\Bigr),
  \qquad
  H_M=ck
  \begin{pmatrix}
    0&-\mathbf A\mathsf C\\
    \mathsf C&0
  \end{pmatrix}.
\end{equation}
Explicitly, in the ordered basis $(e_x,e_y,e_z,h_x,h_y,h_z)^{\mathsf T}$, the matrix and vector are
\begin{equation}\label{ModelHamiltonian}
  H_M=ck
  \begin{pmatrix}
    0&0&0&-g&\alpha&0\\
    0&0&0&-\alpha&0&0\\
    0&0&0&0&0&0\\
    0&-1&0&0&0&0\\
    1&0&0&0&0&0\\
    0&0&0&0&0&0
  \end{pmatrix},\quad H_M\Bigl(\begin{matrix}\mathbf{e}\\\,\hhv\,\end{matrix}\Bigr)=
 ck \begin{pmatrix}(\alpha h_y-g h_x)\sqrt{\mu_0}\\-\alpha h_x\sqrt{\mu_0}\\0\\-e_y\sqrt{\varepsilon_0}\\-e_x\sqrt{\varepsilon_0}\\0 \end{pmatrix}.
\end{equation}

The minimal polynomial of $H_M$, {\it i.e.} a polynomial $\mu(x)$ of smallest degree, which becomes zero matrix upon substitution $x\mapsto H_M$, is 
  \begin{equation}\label{ModelMinPoly}
  \mu_{\scriptscriptstyle H_M}(\lambda)
  =\lambda\bigl(\lambda^2-(ck)^2\alpha\bigr)^2\, .
\end{equation}
Thus, the matrix $H_M$ is non-diagonalizable. Eigenvalues of~\eqref{ModelMinPoly} are
\begin{equation}
  \lambda_0=0,
  \qquad
  \lambda_\pm=\pm ck\sqrt{\alpha}\,.
\end{equation}
Each of $\lambda_+$ and $\lambda_-$ eigenvalues have algebraic multiplicity two that in physics represents two possible EM field rotation
directions in $x-y$ plane. If $g\ne0$, each has only a single linearly independent eigenvector.  So the matrix $H_M$ can be reduced to a direct
sum,
\begin{equation*}
  H_M\sim
  J_2\!\left(ck\sqrt{\alpha}\right)
  \oplus
  J_2\!\left(-ck\sqrt{\alpha}\right)
  \oplus[0]\oplus[0]
\end{equation*}
where $J_2(..)$ is $2\times 2$ Jordan block that signify two possible field rotation directions. 
The two longitudinal zero modes are the usual Maxwell constraint modes, while the defective Jordan blocks lie in the physical
transverse-polarization sector.

\subsection{Solution by computing exponential function of matrix}
Solution of the model equations~\eqref{ModelEquations} can be written in a symbolic form as an exponential of matrix
\begin{equation*}
  \begin{pmatrix}\mathbf{e}(t)\\\,\hhv(t)\,\end{pmatrix}
  =\exp(-\ii H_Mt)
  \begin{pmatrix}\mathbf{e}(0)\\\,\hhv(0)\,\end{pmatrix}.
\end{equation*}
We will demonstrate how to compute this exponential using generalised spectral basis technique given by 
{\bf Theorem}~\ref{MainTheorem}. 

First compute minimal polynomial~\cite{Bialas2008} of the $-\ii t H_M$, where the matrix $H_M$ is defined by~\eqref{ModelHamiltonian}. We have
$\mu(x)=x \left(\alpha  c^2 k^2 t^2+x^2\right)^2=\alpha ^2 c^4 k^4 t^4 x+2 \alpha  c^2 k^2 t^2 x^3+x^5$. It has three distinct roots $\lambda_1,\lambda_2,\lambda_3$ of multiplicities~$m_1=1$, $m_2=2$, and $m_3=2$ respectively,
\[
  \begin{aligned}  
    &\text{root,}\ \lambda_i&\quad0\quad&\quad -\ii c k t\sqrt{\alpha }\quad&\quad \ii ckt \sqrt{\alpha}\\
    &\text{multiplicity,}\ m_i&\quad1\quad&\qquad\quad2\quad&\quad2\ .
  \end{aligned}
\]
Symbolic coefficients of 5-th degree minimal polynomial, therefore, have values
\[
\{c_5 \mapsto 0,
c_4 \mapsto c^4 k^4 t^4 \alpha^2,
c_3 \mapsto 0,
c_2 \mapsto 2 c^2 k^2 t^2 \alpha,
c_1 \mapsto 0,
c_0 \mapsto 1\}
\]
which we substitute into formal expression for $S(x,\lambda)$
$$
\begin{aligned}
  S(x,\lambda)&= c_0x^4+ \bigl(c_0 \lambda +c_1\bigr)x^3+ \left(c_0 \lambda ^2+c_1 \lambda +c_2)\right)x^2\\
  &\phantom{=}+\left(c_0 \lambda ^3+c_1 \lambda ^2+c_2 \lambda +c_3\right)x+\left(c_0 \lambda ^4+c_1 \lambda ^3+c_2 \lambda ^2+c_3 \lambda +c_4\right)
 \,.
\end{aligned}
$$
Computation of generalized spectral basis described in {\bf Theorem}~\ref{MainTheorem} then yields
\[
\begin{aligned}
\Bigg\{&
  \Bigl(\lambda_1=0,\quad\Bigl\{Q_{1,0}(x)=\frac{\left(\alpha  c^2 k^2 t^2+x^2\right)^2}{\alpha ^2 c^4 k^4 t^4}
\Bigr\}\Bigr),\\[1pt]
&
\Bigl(\lambda_2=-\ii ckt\sqrt{\alpha},\quad
  \Bigl\{Q_{2,0}(x)=\frac{x \left(2 x+3 \ii \sqrt{\alpha } c k t\right) \left(\sqrt{\alpha } c k t+\ii x\right)^2}{4 \alpha ^2 c^4 k^4 t^4},
\\
  &\phantom{\Bigl(\lambda_2=-\ii ckt\sqrt{\alpha},\!\qquad}
Q_{2,1}(x)=\frac{x \left(x-\ii \sqrt{\alpha } c k t\right)^2 \left(\sqrt{\alpha } c k t-\ii x\right)}{4 \alpha ^{3/2} c^3 k^3 t^3}
\Bigr\}
  \Bigr),\\[4pt]
&
\Bigl(\lambda_3=\ii ckt\sqrt{\alpha},\quad
  \Bigl\{Q_{3,0}(x)=-\frac{x \left(x+\ii \sqrt{\alpha } c k t\right)^2 \left(2 x-3 \ii \sqrt{\alpha } c k t\right)}{4 \alpha ^2 c^4 k^4 t^4},
\\
&\phantom{\Bigl(\lambda_3=\ii ckt\sqrt{\alpha},\!\qquad }
Q_{3,1}(x)=\frac{x \left(x+\ii \sqrt{\alpha } c k t\right)^2 \left(\sqrt{\alpha } c k t+\ii x\right)}{4 \alpha ^{3/2} c^3 k^3 t^3}
\Bigr\}
\Bigr)
\Bigg\}.
\end{aligned}
\]
The largest multiplicity is $2$, therefore using~\eqref{FormulaForFunction} we compute values of the exponential $\exp (x)$ for all three roots and its first derivative $\exp^\prime (x)$ for $\lambda_2$ and  $\lambda_3$
\begin{equation*}\label{ExplicitExponent}
 \begin{aligned}
   &\exp(x)\rvert_{\mu(x)} =\sum_{i=1}^3\ee^{\lambda_i} Q_{i,0}(x)+\frac{\dd\ee^{x}}{\dd x}\Big\rvert_{\lambda_2}Q_{2,1}(x)+\frac{\dd\ee^{x}}{\dd x}\Big\rvert_{\lambda_3}Q_{3,1}(x)\\
    &=\textstyle -\frac{x^4 \left(\frac{1}{2} \sqrt{\alpha } c k t \sin \left(\sqrt{\alpha } c k t\right)+\cos \left(\sqrt{\alpha } c k t\right)-1\right)}{\alpha ^2 c^4 k^4 t^4}
     +\frac{x^3 \left(\sin \left(\sqrt{\alpha } c k t\right)-\sqrt{\alpha } c k t \cos \left(\sqrt{\alpha } c k t\right)\right)}{2 \alpha ^{3/2} c^3 k^3 t^3}\\
    &\textstyle +\frac{x^2 \left(-\frac{1}{2} \sqrt{\alpha } c k t \sin \left(\sqrt{\alpha } c k t\right)-2 \cos \left(\sqrt{\alpha } c k t\right)+2\right)}{\alpha  c^2 k^2 t^2} \scriptstyle +x \left(\frac{3 \sin \left(\sqrt{\alpha } c k t\right)}{2 \sqrt{\alpha } c k t}-\frac{1}{2} \cos \left(\sqrt{\alpha } c k t\right)\right)+1
    \,.
  \end{aligned}
\end{equation*}
After replacing variables $x^i$ in the polynomial above by corresponding matrix powers $x^i\mapsto (-\ii t H_M)^i$ we obtain the final explicit matrix
\begin{equation*}
\exp(-\ii H_Mt)=  \left(\begin{smallmatrix}
 \cos b & -\frac{c g k t \sin b}{2 \sqrt{\alpha }} & 0 & \frac{\ii g \left(\sin b+b \cos b\right)}{2 \sqrt{\alpha }} & -\ii \sqrt{\alpha } \sin b & 0 \\
 0 & \cos b & 0 & \ii \sqrt{\alpha } \sin b & 0 & 0 \\
 0 & 0 & 1 & 0 & 0 & 0 \\
 0 & \frac{\ii \sin b}{\sqrt{\alpha }} & 0 & \cos b & 0 & 0 \\
 -\frac{\ii \sin b}{\sqrt{\alpha }} & \frac{\ii g \left(\sin b-b \cos b\right)}{2 \alpha ^{3/2}} & 0 & \frac{c g k t \sin b}{2 \sqrt{\alpha }} & \cos b & 0 \\
 0 & 0 & 0 & 0 & 0 & 1 
  \end{smallmatrix}\right)
\end{equation*}
where $b=\sqrt{\alpha } c k t$ and where the constant term of the polynomial, {\it i.e.} $+1$ in this case, was multiplied by unit $6\times 6$ matrix. In is important to stress that to get the answer we only need to compute natural matrix powers, {\it i.e.} no need to compute matrix inverse. On the contrary, the method of Jordan decomposition requires the matrix inverse, which is computed by different specialized method. All calculations essentially were done by simple manipulation of polynomials with coefficients in commutative ring $\bbK$.

\section{Concluding remarks}\label{sec7}

With the help of recursive relations and elementary properties of multiple (repeated) roots, we have demonstrated that the recursively generated polynomials form a generalized spectral basis modulo the minimal polynomial. The presented proof is conceptually simple once one identifies the real meaning of recursion, namely, the recursion permits to construct the primary-component basis,
\[
P_i,\quad (x-\lambda_i)P_i,\quad (x-\lambda_i)^2P_i,\quad \dots
\]
inside the quotient algebra \(K[x]/\mu\). The essential message is the following:
1) the repeated roots of a minimal polynomial represent nilpotent directions, 2) the recursion permits to compute the corresponding projector and nilpotent chain without any need to refer to Jordan matrix form, 3) after substitution of variable, \(x\mapsto A\), these polynomial identities become algebraic identities for the original element \(A\).

This provides both the theoretical justification and the effective symbolic method for evaluation of functions of non-diagonalizable algebra
elements. As far as we know this method is already implemented in {\it Mathematica}~\cite{Wolfram2026} kernel function {\bf MatrixFunction[~]} and can be called using option Method$\to$"SpectralBasis". 

Of course, the presented enhancement does not eliminate limitations that are common to the method of spectral basis. For example, the method requires that the function should be differentiable (for example, absolute value function $\lvert A\rvert $ can't be computed), and be non-singular at all eigenvalues. Also for multivalued functions (like square root or logarithm) this method returns a single answer (that is continuously connected to unity), even in the cases~\cite{Acus2024v2} when the matrix/multivector has multiple (even infinite number) square roots.

\end{document}